\documentclass[a4paper,12pt]{article}

\usepackage{amsmath,amsthm,amssymb,mathtools}
\usepackage{mathrsfs}
\usepackage[top=2.2cm,bottom=2.3cm,left=2.7cm,right=2.7cm]{geometry}
\usepackage{setspace}
\usepackage[round,authoryear]{natbib}
\usepackage{booktabs}
\usepackage{array}
\usepackage{enumitem}
\usepackage{hyperref}

\hypersetup{
  colorlinks=true,
  linkcolor=blue,
  citecolor=blue,
  urlcolor=blue
}

\newtheorem{theorem}{Theorem}

\newtheorem{lemma}{Lemma}

\newtheorem{definition}{Definition}
\newtheorem{remark}{Remark}
\newtheorem{example}{Example}

\newcommand{\Stable}{\mathcal S}
\newcommand{\Reach}{\mathcal R}
\newcommand{\Hist}{\mathscr H}
\newcommand{\emptysetmatch}{\varnothing}

\title{Stable Matchings and Their Histories:\\
Historical Reachability under Population Shocks}
\author{Yi-You Yang\thanks{Department of Applied Mathematics, Chung Yuan Christian University, Taoyuan City, Taiwan. E-mail address: yyyang@cycu.edu.tw}}
\date{}

\begin{document}

\maketitle

\begin{abstract}
Sequential-entry procedures need not reach every stable matching.  We ask
whether full reachability is restored when agents on both sides may enter and
exit repeatedly, with proposal-chain restabilization after each population
change.  It is not: a three-by-three marriage market has a stable matching
that is unreachable from the empty active market along any such history
restricted to terminal agents.

We then allow temporary historical agents who are absent at the terminal
date.  For every target stable matching, we construct a target-dependent,
preference-preserving augmentation and an admissible history that reaches it,
with all temporary agents exiting before termination.  The construction
extends to many-to-one markets with responsive hospital preferences.

Thus reachability depends on the admissible history class.  Terminal-agent
histories may select a proper subset of the stable set, whereas
preference-preserving augmented histories recover the full stable set.  When
such augmented histories are admissible but past participants are unobserved,
the terminal primitives alone do not exclude any stable matching.
\end{abstract}

\noindent\textbf{Keywords:} stable matching; reachability; population shocks;
sequential entry; vacancy chains; responsive preferences.\\
\textbf{JEL classification:} C78, D47, J63.

\section{Introduction}
\label{sec:introduction}

A stable matching is defined from the agents and preferences present in a
market.  Actual matching markets, however, are formed through histories.
Workers retire, graduates enter, positions open and close, and inherited
relationships are repeatedly adjusted after population changes.  This raises
a question that cannot be answered from the terminal market alone: which of
its stable matchings can be supported by a preceding population history?

The question is related to two classical strands of matching theory.
\citet{RothVandeVate1990} show that successive blocking-pair adjustments lead
from any initial matching to some stable matching.  Their result concerns
convergence to the stable set rather than reachability of a prescribed element
of that set.  The random order mechanism studied by \citet{Ma1996} introduces
agents sequentially and restabilizes the market after each arrival.
\citet{Ma1996} shows that not every stable matching need be generated.
\citet{KlausKlijn2007} correct part of the original argument while preserving
this conclusion, and \citet{Cheng2016} shows that deciding whether a specified
stable matching is reachable under the random order mechanism is NP-complete.

A second strand studies re-equilibration after population shocks.
\citet{BlumRothRothblum1997} analyze vacancy chains following retirements and
the opening of new positions in senior-level labor markets.  Subsequent work
examines timing, entry, and restabilization in marriage and many-to-one
markets \citep{BlumRothblum2002,Cantala2004,BoyleEchenique2009}.  These papers
study a forward problem: given an inherited stable matching and a population
shock, which stable outcome follows?  We study the associated inverse
problem: given a stable matching of the terminal market, does there exist a
population history leading to it?

We distinguish histories according to the agents they may contain.  A
\emph{terminal-agent history} uses only agents who are present in the terminal
market.  Agents on either side may enter and exit repeatedly, and every
population change is followed by the proposal-chain continuation specified
below.  Allowing both entry and exit permits repeated reversals in scarcity
orientation.  Nevertheless, our first result shows that such reversals do not
restore full reachability.  We construct a three-by-three marriage market
with a stable matching that cannot be reached from the empty active market by
any terminal-agent history.  The obstruction appears in the final transition:
the target matching has no stable five-agent predecessor whose missing agent,
upon entry, generates it through the prescribed proposal chain.

An \emph{augmented history} may additionally contain temporary agents who are
absent from the terminal market.  Their introduction must preserve each
original agent's preferences over the original opposite-side agents and the
unmatched option, and all temporary agents must leave before the terminal
date.  Our second result establishes
historical completeness under this broader class.  For every marriage market
and every target stable matching, there exists a target-dependent,
preference-preserving augmentation and an admissible history that terminates
at that matching.

The construction uses temporary agents as a historical scaffold.  Original
agents are first matched to private temporary partners.  These partners are
then removed sequentially, and stability of the target matching ensures that
each released agent is rejected by all preferred but incompatible partners
and accepted by the prescribed target partner.  The construction preserves
stability after every population event and removes all temporary agents before
the terminal date.

Our third result extends historical completeness to many-to-one markets with
responsive hospital preferences.  Private temporary hospitals initially hold
the original doctors, while temporary doctors occupy the seats of the
original hospitals.  Responsiveness and stability provide a ranking interval
in which these temporary seat holders can be placed: they lie below the
doctors assigned to the hospital in the target matching and above every doctor
who would prefer to deviate there.  The temporary doctors can consequently be
replaced one at a time by the target doctors without disturbing previously
installed assignments.

Formally, there exists a marriage market \(M\) such that
\[
  \Reach_{\Hist^{\mathrm{term}}(M)}(M)
  \subsetneq
  \Stable(M).
\]
By contrast, every responsive many-to-one market \(M\) satisfies
\[
  \Reach_{\Hist^{\mathrm{aug}}(M)}(M)
  =
  \Stable(M).
\]  Reachability is therefore a refinement of stability conditional on
a specified class of histories.  When the relevant population history is
known, it may eliminate stable outcomes.  When preference-preserving augmented histories are admissible but past
participants and their preferences are unobserved, the terminal primitives
alone do not justify deleting any element of the stable set.

The paper is related to, but distinct from, reachability through blocking-pair
dynamics.  \citet{AbeledoRothblum1995} and \citet{Rudov2024} allow unstable
intermediate matchings in a fixed population, whereas we require stability
after every population event.  \citet{Mao2026} studies whether existing
matched pairs can be preserved when a market expands or contracts.  We
instead use temporary expansions to construct a history supporting a
prescribed terminal stable matching.  The positive results here are
existential and target dependent; they do not provide one common augmentation
generating the entire stable set, nor do they cover general substitutable
hospital choice.

The remainder of the paper is organized as follows.
Section~\ref{sec:model} defines proposal-chain histories and historical
completeness.  Section~\ref{sec:negative} establishes nonreachability under
terminal-agent histories.  Section~\ref{sec:marriage} proves historical
completeness for marriage markets, and Section~\ref{sec:manytoone} extends the
construction to responsive many-to-one markets.
Section~\ref{sec:discussion} discusses the interpretation and limitations of
the results.

\section{Markets and Historical Reachability}
\label{sec:model}

\subsection{Marriage markets}

A marriage market is a tuple
\[
  M=(A,B,\succ),
\]
where \(A\) and \(B\) are finite disjoint sets and every agent has a strict
preference ordering over the agents on the opposite side and the unmatched
option \(\emptyset\).  A partner is acceptable if it is preferred to
\(\emptyset\).  A matching \(\mu\) assigns each agent either an acceptable
partner on the opposite side or \(\emptyset\), with
\(\mu(a)=b\) if and only if \(\mu(b)=a\).

A matching is \emph{stable} if it is individually rational and there is no
pair \((a,b)\in A\times B\) such that
\[
  b\succ_a\mu(a)
  \qquad\text{and}\qquad
  a\succ_b\mu(b).
\]
The stable set is denoted by \(\Stable(M)\).  Stable matchings exist by
\citet{GaleShapley1962}.  Their lattice structure is reviewed in
\citet{RothSotomayor1990}.

\subsection{Proposal-chain continuation}

We use the sequential restabilization process underlying the random-order and
vacancy-chain literatures.  Suppose the active market is stable and one agent
enters.  The entrant becomes the active proposer and applies to acceptable
partners in descending order, excluding partners who have already rejected
that proposer during the current chain.  A recipient accepts the proposal if
and only if she prefers the proposer to her current partner.  Otherwise she
rejects it and the proposer continues to the next acceptable partner.  An
acceptance displaces the recipient's previous partner, if any; the displaced
agent becomes the next proposer on the same side.  The process stops when a
proposal is accepted without displacing another agent or when the active
proposer has exhausted all acceptable partners.  The symmetric process applies
when the entrant belongs to the other side.

If a matched agent exits, the abandoned partner initiates the corresponding
proposal chain.  If the inherited matching restricted to the new population is
already stable, the continuation is allowed to have length zero.  These are
the single-agent entry and exit versions of the decentralized continuation
processes studied by \citet{Ma1996} and \citet{BlumRothRothblum1997}.  We only
use histories for which the specified continuation is finite and reaches a
stable matching before the next population event.

\begin{definition}[Stable proposal-chain history]
A stable proposal-chain history is a finite sequence
\[
  h=\bigl(M_t,\mu_t\bigr)_{t=0}^{T}
\]
such that:
\begin{enumerate}[label=(\roman*)]
  \item \(M_0\) is the empty active market and
        \(\mu_0=\emptysetmatch\);
  \item \(\mu_t\in\Stable(M_t)\) for every \(t\);
  \item \(M_{t+1}\) is obtained from \(M_t\) by the entry or exit of one
        agent;
  \item \(\mu_{t+1}\) is the stable outcome of the proposal-chain
        continuation from the inherited matching after that population
        event.
\end{enumerate}
\end{definition}

Let \(M\) be a fixed terminal market.  A \emph{terminal-agent history} of
\(M\) is a stable proposal-chain history in which every \(M_t\) is an induced
submarket of \(M\) and \(M_T=M\).  An \emph{augmented history} of \(M\) is
supported by a finite augmented market \(\widetilde M\): every \(M_t\) is an
induced submarket of \(\widetilde M\), \(M_T=M\), and every agent in
\(\widetilde M\setminus M\) is absent at the terminal date.  Thus an agent's
preferences are fixed throughout the history, including across repeated exits
and re-entries.  We write \(\Hist^{\mathrm{term}}(M)\) and
\(\Hist^{\mathrm{aug}}(M)\) for the corresponding classes of histories.

\begin{definition}[Scarcity orientation]
For a market with sides \(A\) and \(B\), a population event is
\emph{\(A\)-scarcity oriented} if an \(A\)-agent exits or a \(B\)-agent enters.
It is \emph{\(B\)-scarcity oriented} if a \(B\)-agent exits or an \(A\)-agent
enters.  A scarcity-orientation reversal occurs when two successive population
events have opposite orientations.
\end{definition}

\begin{definition}[Preference-preserving augmentation]
An augmented market \(\widetilde M\) of \(M\) is preference preserving if,
for every original agent, the restriction of the extended preference order to
the original opposite-side agents and \(\emptyset\) coincides with that
agent's preference order in \(M\).  New agents may be inserted anywhere in
the resulting preference lists, and their own preferences may be chosen
freely.  In a many-to-one market, the capacities of all original hospitals
are unchanged.  After all temporary agents exit, the terminal market is
exactly \(M\).
\end{definition}

For a class \(\Hist\) of admissible histories ending at \(M\), define
\[
  \Reach_{\Hist}(M)
  :=
  \left\{
    \mu\in\Stable(M):
    \text{some }h\in\Hist\text{ terminates at }(M,\mu)
  \right\}.
\]

\begin{definition}[Historical completeness]
A class of matching markets is historically complete under a class of
histories if
\[
  \Reach_{\Hist}(M)=\Stable(M)
\]
for every market \(M\) in the class.
\end{definition}

Historical completeness is an existential property.  The supporting
augmentation and the order of population events may depend on the target
stable matching.

\section{Repeated Entry and Exit Need Not Restore Full Reachability}
\label{sec:negative}

This section strengthens the nonreachability phenomenon of \citet{Ma1996}.
The agents observed in the terminal market may enter and exit an arbitrarily
large finite number of times, and the scarcity orientation may reverse
repeatedly.  Even so, some
stable matchings remain unreachable.

\begin{example}[A cyclic three-by-three market]
\label{ex:cycle}
Let \(A=\{m_1,m_2,m_3\}\) and \(B=\{w_1,w_2,w_3\}\), with preferences
\[
\begin{array}{lll}
 m_1: w_1\succ w_2\succ w_3,
& m_2: w_2\succ w_3\succ w_1,
& m_3: w_3\succ w_1\succ w_2,\\[1mm]
 w_1: m_2\succ m_3\succ m_1,
& w_2: m_3\succ m_1\succ m_2,
& w_3: m_1\succ m_2\succ m_3.
\end{array}
\]
All partners are acceptable.  The full market has exactly three stable
matchings:
\[
\begin{aligned}
  \mu^A
  &=\{(m_1,w_1),(m_2,w_2),(m_3,w_3)\},\\
  \mu^0
  &=\{(m_1,w_2),(m_2,w_3),(m_3,w_1)\},\\
  \mu^B
  &=\{(m_1,w_3),(m_2,w_1),(m_3,w_2)\}.
\end{aligned}
\]
The matching \(\mu^A\) is optimal for the men, \(\mu^B\) is optimal for the
women, and every agent receives the second-ranked partner at \(\mu^0\).
\end{example}

The following table records the unique stable matching of every five-agent
submarket and the result when the missing agent re-enters and initiates the
proposal chain.

\begin{center}
\renewcommand{\arraystretch}{1.18}
\begin{tabular}{>{\centering\arraybackslash}p{2.2cm}
                p{7.2cm}
                >{\centering\arraybackslash}p{2.7cm}}
\toprule
Missing agent & Unique stable matching before re-entry & Continuation after re-entry\\
\midrule
\(m_1\) & \(\{(m_2,w_2),(m_3,w_3)\}\) & \(\mu^A\)\\
\(m_2\) & \(\{(m_1,w_1),(m_3,w_3)\}\) & \(\mu^A\)\\
\(m_3\) & \(\{(m_1,w_1),(m_2,w_2)\}\) & \(\mu^A\)\\
\(w_1\) & \(\{(m_1,w_3),(m_3,w_2)\}\) & \(\mu^B\)\\
\(w_2\) & \(\{(m_1,w_3),(m_2,w_1)\}\) & \(\mu^B\)\\
\(w_3\) & \(\{(m_2,w_1),(m_3,w_2)\}\) & \(\mu^B\)\\
\bottomrule
\end{tabular}
\end{center}

The assertions in the example can be checked directly.  Since all partners
are acceptable and the two sides have equal size, every stable matching of the
full market is perfect.  The three perfect matchings not displayed in
Example~\ref{ex:cycle} are unstable:
\[
\begin{array}{c|c}
\text{matching} & \text{blocking pair}\\
\hline
\{(m_1,w_1),(m_2,w_3),(m_3,w_2)\} & (m_3,w_1)\\
\{(m_1,w_2),(m_2,w_1),(m_3,w_3)\} & (m_2,w_3)\\
\{(m_1,w_3),(m_2,w_2),(m_3,w_1)\} & (m_1,w_2).
\end{array}
\]
Hence the full market has exactly the three stated stable matchings.  For each
five-agent submarket, deferred acceptance with either side proposing yields
the same matching displayed in the table, so that matching is unique.  By
cyclic symmetry it is enough to check one missing man and one missing woman.
If \(m_1\) is absent, both versions of deferred acceptance yield
\(\{(m_2,w_2),(m_3,w_3)\}\); when \(m_1\) enters, he is accepted by his
first-choice unmatched woman \(w_1\), producing \(\mu^A\).  If \(w_1\) is
absent, both versions yield \(\{(m_1,w_3),(m_3,w_2)\}\); when \(w_1\) enters,
she is accepted by her first-choice unmatched man \(m_2\), producing
\(\mu^B\).  The remaining four rows follow by the same cyclic relabeling.

\begin{theorem}[Failure of terminal-agent historical completeness]
\label{thm:negative}
There exists a marriage market \(M\) such that
\[
  \Reach_{\Hist^{\mathrm{term}}}(M)
  \subsetneq
  \Stable(M),
\]
even when every terminal agent may enter and exit an arbitrarily large finite
number of times and every population change is followed by the specified
proposal-chain continuation.
\end{theorem}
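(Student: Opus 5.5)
We take \(M\) to be the cyclic market of Example~\ref{ex:cycle} and show that \(\mu^0\notin\Reach_{\Hist^{\mathrm{term}}}(M)\). Since \(\mu^0\in\Stable(M)\) by the example, this gives the strict inclusion.

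The argument looks only at the last population event of a terminal-agent history. Let \(h=(M_t,\mu_t)_{t=0}^T\) be any such history terminating at \((M,\mu)\). Because \(M_0\) is empty and \(M_T=M\) is nonempty, we have \(T\ge 1\).

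The event producing \(M_T\) from \(M_{T-1}\) adds or removes exactly one agent, and every \(M_t\) is an induced submarket of \(M\). It therefore cannot be an exit, since that would make \(M_{T-1}\) a seven-agent submarket of a six-agent market. Hence it is the entry of one agent \(x\in A\cup B\), and \(M_{T-1}\) is the five-agent submarket \(M\setminus\{x\}\).

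Next we use the table. By condition (ii) of the definition, \(\mu_{T-1}\in\Stable(M\setminus\{x\})\). The table and the verification following it show that this stable set is a singleton. So \(\mu_{T-1}\) is forced to be the displayed matching, whatever happened in the earlier history: many entries and exits, any number of scarcity-orientation reversals.

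By condition (iv), \(\mu_T\) is then the outcome of the proposal chain started by \(x\) from that matching. The continuation is deterministic, and the table records it as \(\mu^A\) when \(x\) is a man and \(\mu^B\) when \(x\) is a woman. For the row with \(m_1\) missing, \(m_1\) proposes first to \(w_1\). She is unmatched, so she accepts, no one is displaced, and the chain stops at \(\mu^A\). For the row with \(w_1\) missing, \(w_1\) proposes to \(m_2\), who is unmatched and accepts, giving \(\mu^B\). The remaining four rows follow by cyclic relabeling. Thus \(\mu_T\in\{\mu^A,\mu^B\}\), so \(\Reach_{\Hist^{\mathrm{term}}}(M)\subseteq\{\mu^A,\mu^B\}\subsetneq\Stable(M)\).

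The main point to get right is that the earlier history is irrelevant. This depends on two facts. First, condition (ii) requires stability at \(T-1\), and each five-agent submarket has a unique stable matching, so the predecessor state is pinned down. Second, because the continuation is deterministic, each entry leads to a single outcome rather than a set of possible ones. The uniqueness claim has already been checked in the text: there the men-proposing and women-proposing deferred acceptance outcomes were shown to coincide, and by the lattice structure that implies a singleton stable set. So the remaining work is just the finite case check organized by cyclic symmetry.
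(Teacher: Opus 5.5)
Your proposal is correct and follows the paper's proof. Both arguments observe that the full market can only be entered from a five-agent submarket, whose stable matching is unique. The deterministic entrant-initiated chain from that matching then yields \(\mu^A\) or \(\mu^B\), never \(\mu^0\). Looking directly at the terminal transition \(T-1\to T\), rather than the paper's ``first date'' at which \(\mu^0\) appears on the full market, is an equivalent and slightly cleaner phrasing.
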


\begin{proof}
Use the market in Example~\ref{ex:cycle}.  We show that \(\mu^0\) is not
reachable.

Suppose, to the contrary, that a terminal-agent history reaches \(\mu^0\), and
consider the first date at which the active market is the full six-agent
market and the stable matching is \(\mu^0\).  Since the history starts from the
empty active market and uses no agents outside the terminal population, the
immediately preceding population event that creates the full market must be
the entry of one of the six agents.  The preceding active market is therefore
the corresponding five-agent submarket.

Each five-agent submarket has the unique stable matching displayed in the
table.  If a man is the final entrant, his proposal chain produces \(\mu^A\).
If a woman is the final entrant, her proposal chain produces \(\mu^B\).
Neither case produces \(\mu^0\), a contradiction.

The argument is unaffected by earlier exits and re-entries.  Every visit to
the full market must be preceded by the re-entry of a missing terminal agent,
and the same six incoming transitions apply.
\end{proof}

\begin{remark}
Allowing arbitrarily many finite orientation reversals does not remove the
obstruction.  What fails is the existence of a stable codimension-one
predecessor whose entrant-initiated continuation leads to the target
matching.
\end{remark}

\section{Historical Completeness in Marriage Markets}
\label{sec:marriage}

We now allow agents who participated in the past but are absent from the
terminal market.  Under augmented histories, marriage markets are
historically complete.

\begin{theorem}[Historical completeness of marriage markets]
\label{thm:marriage-positive}
Let \(M=(A,B,\succ)\) be a marriage market and let
\(\mu\in\Stable(M)\).  There exists a finite preference-preserving augmentation
\(\widetilde M\) and an augmented stable proposal-chain history from the empty
active market that terminates at \((M,\mu)\).  All temporary agents have exited
at the terminal date.
\end{theorem}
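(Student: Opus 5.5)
The plan follows the scaffold idea described in the introduction. Fix \(\mu\in\Stable(M)\). For each \(a\in A\) we add a private temporary woman \(b'_a\), and for each \(b\in B\) a private temporary man \(a'_b\). Preferences are as follows.
\begin{itemize}[label={}]
  \item \(b'_a\) finds only \(a\) acceptable, and \(a\) ranks \(b'_a\) first.
  \item \(a'_b\) finds only \(b\) acceptable.
  \item If \(b\) is matched at \(\mu\), then \(b\) ranks \(a'_b\) immediately below \(\mu(b)\). If \(b\) is unmatched at \(\mu\), then \(b\) ranks \(a'_b\) immediately above \(\emptyset\).
\end{itemize}
Each original agent receives at most one new entry in its list, so the augmentation is preference preserving.

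\emph{Build-up phase.} First, for each \(b\in B\), let \(a'_b\) enter and then \(b\). Next, for each \(a\in A\), let \(b'_a\) enter and then \(a\). When a temporary agent enters, its unique acceptable partner is either absent or already holds a preferred private partner, so the chain is rejected or exhausted immediately. When an original agent enters, its private partner is present and unmatched, so it accepts without displacing anyone. The resulting matching pairs every agent with its private partner. It is stable because:
\begin{itemize}[label={}]
  \item every man holds his top choice;
  \item temporary agents find only their private partner acceptable, so no other pair involving them can block.
\end{itemize}

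\emph{Release phase.} Process the men of \(A\) one at a time in any order, letting \(b'_a\) exit. We maintain the following invariant after a set \(S\subseteq A\) of men has been processed:
\begin{itemize}[label={}]
  \item a man \(a\in S\) holds \(\mu(a)\), or nothing if he is unmatched at \(\mu\);
  \item a man \(a\notin S\) holds \(b'_a\);
  \item a woman \(b\) holds \(\mu(b)\) if \(\mu(b)\in S\), and holds \(a'_b\) otherwise.
\end{itemize}

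Consider the exit of \(b'_a\). Man \(a\) proposes down his list. Any \(b\) with \(b\succ_a\mu(a)\) satisfies \(\mu(b)\succ_b a\) by stability of \(\mu\). Such a \(b\) currently holds \(\mu(b)\) or \(a'_b\), and both rank above \(a\), so she rejects. The same argument applies when \(a\) is unmatched at \(\mu\): every acceptable \(b\) that finds \(a\) acceptable must be matched at \(\mu\) to someone she prefers to \(a\), otherwise \((a,b)\) blocks \(\mu\). In that case \(a\) is rejected everywhere and ends unmatched.

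If \(\mu(a)=b\), then \(b\) holds \(a'_b\), prefers \(a\), and accepts. The displaced \(a'_b\) has no other acceptable partner, so the chain ends with \(a'_b\) unmatched. We then let \(a'_b\) exit, which is a length-zero continuation. This restores the invariant with \(S\cup\{a\}\).

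Stability of each intermediate matching follows from the invariant. A man outside \(S\) holds his top choice and cannot block. A man \(a\in S\) with \(b\succ_a\mu(a)\) faces a woman who ranks her current holding, \(\mu(b)\) or \(a'_b\), above \(a\).

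\emph{Final phase.} After all men are processed, the only remaining temporary agents are the \(a'_b\) with \(b\) unmatched at \(\mu\). Let each exit in turn. The released \(b\) proposes down her list, and every acceptable \(a\) who finds \(b\) acceptable holds \(\mu(a)\succ_a b\). If instead such an \(a\) were unmatched at \(\mu\), the pair \((a,b)\) would block \(\mu\). So \(b\) ends unmatched.

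The terminal market is \(M\), the terminal matching is \(\mu\), and all temporary agents have exited. The step I expect to need the most care is this verification: each proposal chain terminates exactly as claimed and every intermediate matching is stable. In particular, one must place \(a'_b\) immediately below \(\mu(b)\). This placement lets \(b\) reject every man who strictly prefers her to his target partner while still accepting \(\mu(b)\). Both properties rest on the single stability inequality for \(\mu\).
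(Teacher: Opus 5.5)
Your proposal is correct and is essentially the paper's construction: top-ranked private temporary women for the men, a placeholder for each woman inserted immediately below \(\mu(b)\) (or immediately above \(\emptyset\)), release of the men one at a time with stability of \(\mu\) controlling each proposal chain, and final removal of the placeholders held by women unmatched at \(\mu\). The differences are cosmetic and do not affect the argument: you use an interleaved build-up order, let each displaced placeholder exit immediately rather than at the end, and explicitly run the woman-initiated chain in the final phase instead of invoking a zero-length continuation.
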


\begin{proof}
For each man \(m\in A\), introduce a private temporary woman \(a_m\).  Place
\(a_m\) first in \(m\)'s extended preference order, above \(\emptyset\) and
every original woman.  The woman \(a_m\) finds only \(m\) acceptable, and
every temporary woman other than \(a_m\) is unacceptable to \(m\).

For each woman \(w\in B\), introduce a private temporary man \(b_w\), who
finds only \(w\) acceptable; every temporary man other than \(b_w\) is
unacceptable to \(w\).  If \(\mu(w)=m_w\in A\), insert \(b_w\) immediately
below \(m_w\) in \(w\)'s extended preference order.  Since \(\mu\) is
individually rational, this places \(b_w\) above \(\emptyset\).  If
\(\mu(w)=\emptyset\), insert \(b_w\) immediately above \(\emptyset\).
The relative rankings of all original agents are unchanged.

We first construct the scaffold matching
\[
  \nu_{\emptyset}
  =
  \{(m,a_m):m\in A\}
  \cup
  \{(b_w,w):w\in B\}.
\]
Starting from the empty active market, let the original women enter first,
then the temporary men \(b_w\), then the temporary women \(a_m\), and finally
the original men.  Each \(b_w\) is accepted by his unique acceptable woman.
Each \(a_m\) remains unmatched until \(m\) enters, at which point \(m\)
proposes first to \(a_m\) and is accepted.  After every entry the current
matching is stable: every entered original man has his first choice, each
matched temporary agent has its unique acceptable partner, and an unmatched
\(a_m\) accepts only a man who has not yet entered.  The final matching of this
phase is \(\nu_{\emptyset}\).

For \(S\subseteq A\), let \(\nu_S\) denote the matching in which
\[
  \nu_S(m)=
  \begin{cases}
    \mu(m),&m\in S,\\
    a_m,&m\notin S,
  \end{cases}
\]
where \(\nu_S(m)=\emptyset\) when \(m\in S\) and
\(\mu(m)=\emptyset\).  An original woman \(w\) is matched to \(\mu(w)\) if
\(\mu(w)\in S\), and otherwise to \(b_w\).  A temporary man displaced from
\(w\) is unmatched.

We show inductively that, after precisely the private women indexed by
\(S\) have exited, the current stable matching is \(\nu_S\).  The claim holds
for \(S=\emptyset\).  Take \(m\notin S\) and remove \(a_m\).  Suppose first
that \(\mu(m)=w^*\in B\).  For every woman \(w\succ_m w^*\), stability of
\(\mu\) implies that either \(m\) is unacceptable to \(w\) or
\(\mu(w)\succ_w m\).  If \(\mu(w)\in S\), then \(w\) is currently matched to
\(\mu(w)\) and rejects \(m\).  If \(\mu(w)\notin S\), then \(w\) is matched
to \(b_w\).  When \(\mu(w)\) is an original man, the insertion of \(b_w\)
immediately below \(\mu(w)\) gives \(b_w\succ_w m\).  When
\(\mu(w)=\emptyset\), stability implies \(\emptyset\succ_w m\), and hence
\(b_w\succ_w\emptyset\succ_w m\).  Thus every woman preferred by \(m\) to
\(w^*\) rejects him.

The woman \(w^*=\mu(m)\) is currently matched to \(b_{w^*}\), because her
target man \(m\) has not yet been released.  By construction,
\(m\succ_{w^*}b_{w^*}\), so she accepts \(m\) and displaces \(b_{w^*}\).
The displaced temporary man has just been rejected by his only acceptable
woman, so the chain ends.  The resulting matching is
\(\nu_{S\cup\{m\}}\).

If \(\mu(m)=\emptyset\), consider any original woman \(w\) whom \(m\) finds
acceptable.  Stability of \(\mu\) implies that \(w\) either finds \(m\)
unacceptable or has a target partner she prefers to \(m\).  In the latter case her current partner is either that target man or
her temporary holder \(b_w\), both of whom she prefers to \(m\).  Hence every
acceptable proposal is rejected and \(m\) remains unmatched.  Again the
resulting matching is \(\nu_{S\cup\{m\}}\).

It remains to verify stability of the induction invariant.  An unreleased man
has his first choice \(a_m\).  If a released man prefers an original woman to
his assignment under \(\mu\), the preceding argument shows that her current
partner is preferred to him.  A private temporary woman accepts only her
assigned man.  A temporary man \(b_w\) is either matched to \(w\), or has been
displaced by \(\mu(w)\), whom \(w\) prefers to \(b_w\).  Thus \(\nu_S\) is
stable for every \(S\).

After all private women have exited, the restriction to the original agents
is exactly \(\mu\).  Remove first every unmatched temporary man; this changes
neither the matching nor stability.  Any temporary man still matched is
paired with a woman \(w\) satisfying \(\mu(w)=\emptyset\).  Remove these men
one at a time.  After such an exit, \(w\) becomes unmatched.  No original man
forms a blocking pair with \(w\), because \(\mu\) is stable, and no other
temporary man finds \(w\) acceptable.  Hence every such exit has a zero-length
stable continuation.  All temporary agents have now left, and the terminal
market and matching are \((M,\mu)\).
\end{proof}

\begin{remark}[Target dependence]
The augmentation in Theorem~\ref{thm:marriage-positive} depends on the target
matching \(\mu\), because each temporary holder is inserted relative to the
target partner.  The theorem establishes
\[
  \forall \mu\in\Stable(M)\;\exists \widetilde M_\mu\;\exists h_\mu,
\]
not the stronger existence of one common augmentation that reaches every
stable matching through different histories.
\end{remark}

\section{Responsive Many-to-One Markets}
\label{sec:manytoone}

We now consider a college-admissions or doctor--hospital market.  Let \(D\) be
a finite set of doctors and \(H\) a finite set of hospitals.  Hospital \(h\)
has capacity \(q_h\geq 1\).  Each doctor has a strict preference over
\(H\cup\{\emptyset\}\).  Each hospital has a strict ranking over
\(D\cup\{\emptyset\}\), inducing a responsive preference over feasible sets:
it chooses up to \(q_h\) highest-ranked acceptable doctors.  See
\citet{RothSotomayor1990} and \citet{ChambersYenmez2018}.

A matching \(\mu\) assigns each doctor to at most one hospital and each
hospital a set \(\mu(h)\) of at most \(q_h\) doctors.  Stability is individual
rationality together with the absence of a blocking pair \((d,h)\), where
\[
  h\succ_d\mu(d)
  \quad\text{and}\quad
  d\in C_h\bigl(\mu(h)\cup\{d\}\bigr).
\]

The histories constructed in this section require only doctor-initiated
proposal chains.  Such a chain begins when a doctor enters or when her private
capacity-one temporary hospital exits.  The doctor applies to acceptable
hospitals in descending order.  Upon receiving a proposal, hospital \(h\)
retains its \(q_h\) highest-ranked acceptable doctors among its current
assignees and the proposer.  A displaced doctor becomes the next active
proposer.  The chain stops when no doctor is displaced or the active doctor
has exhausted all acceptable hospitals.  Every other population event used
below leaves the inherited matching stable and therefore induces a zero-length
continuation.

For a fixed stable matching \(\mu\), define the potential deviators to
hospital \(h\) by
\[
  B_h(\mu)
  :=
  \left\{
    d\in D\setminus\mu(h):
    h\succ_d\mu(d)
    \text{ and }d\succ_h\emptyset
  \right\}.
\]

\begin{lemma}[Separation at a stable matching]
\label{lem:separation}
Let \(\mu\) be stable.  For every hospital \(h\):
\begin{enumerate}[label=(\roman*)]
  \item if \(B_h(\mu)\neq\emptyset\), then \(|\mu(h)|=q_h\);
  \item every doctor in \(\mu(h)\) is ranked above every doctor in
        \(B_h(\mu)\).
\end{enumerate}
\end{lemma}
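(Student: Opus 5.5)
Both parts follow directly from stability: every \(d\in B_h(\mu)\) satisfies the doctor-side half of a blocking condition at \(h\), so I only need to unpack the responsive choice rule \(C_h\). Fix a hospital \(h\) and a doctor \(d\in B_h(\mu)\). By definition \(h\succ_d\mu(d)\), \(d\succ_h\emptyset\), and \(d\notin\mu(h)\). Stability then forces \(d\notin C_h(\mu(h)\cup\{d\})\), since otherwise \((d,h)\) would block \(\mu\).

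For part (i), suppose instead that \(|\mu(h)|<q_h\). The set \(\mu(h)\cup\{d\}\) then has at most \(q_h\) elements. By individual rationality of \(\mu\), every doctor in \(\mu(h)\) is acceptable to \(h\), and \(d\) is acceptable by assumption. Under the responsive choice rule \(h\) keeps up to \(q_h\) of its highest-ranked acceptable doctors. Since all members of \(\mu(h)\cup\{d\}\) are acceptable and there are at most \(q_h\) of them, \(C_h(\mu(h)\cup\{d\})=\mu(h)\cup\{d\}\), which contains \(d\). This contradicts the blocking condition above, so \(|\mu(h)|=q_h\).

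For part (ii), let \(d\in B_h(\mu)\). Part (i) gives \(|\mu(h)|=q_h\), so \(\mu(h)\cup\{d\}\) has exactly \(q_h+1\) acceptable doctors. The responsive choice therefore drops exactly the lowest-ranked one. Since \(d\) is not chosen, \(d\) must be that lowest-ranked doctor. Hence \(d'\succ_h d\) for every \(d'\in\mu(h)\). As \(d\) was an arbitrary element of \(B_h(\mu)\), this proves (ii).

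There is no real obstacle here. The only care needed is to invoke individual rationality of \(\mu\), so that every assignee is acceptable and the responsive choice rule keeps all doctors when there are at most \(q_h\) of them. Strictness of \(h\)'s ranking guarantees that the dropped doctor is unique.
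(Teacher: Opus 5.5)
Your proof is correct and follows essentially the same route as the paper: both parts are obtained by observing that if \(h\) would choose \(d\in B_h(\mu)\) from \(\mu(h)\cup\{d\}\)—either because \(h\) has a free seat or because it ranks \(d\) above some current assignee—then \((d,h)\) blocks \(\mu\). Your explicit appeal to individual rationality, and to strictness to pin down the single dropped doctor, just spells out details that the paper leaves implicit.
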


\begin{proof}
If \(d\in B_h(\mu)\) and \(|\mu(h)|<q_h\), responsiveness implies that
\(h\) would accept \(d\), so \((d,h)\) would block \(\mu\).  Hence
\(|\mu(h)|=q_h\).  If \(h\) ranked \(d\in B_h(\mu)\) above some
\(d'\in\mu(h)\), it would replace \(d'\) by \(d\), and \((d,h)\) would again
block \(\mu\).
\end{proof}

The lemma provides an interval in each hospital's ranking in which temporary
seat holders can be inserted.

\begin{theorem}[Historical completeness under responsiveness]
\label{thm:responsive}
Let \(M=(D,H,q,\succ)\) be a many-to-one matching market with responsive
hospital preferences, and let \(\mu\in\Stable(M)\).  There exists a finite
preference-preserving augmentation and an augmented stable proposal-chain
history from the empty active market that terminates at \((M,\mu)\).  The
construction uses \(|D|\) temporary capacity-one hospitals and
\(\sum_{h\in H}q_h\) temporary doctors.
\end{theorem}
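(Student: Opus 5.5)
The plan is to mirror the proof of Theorem~\ref{thm:marriage-positive}, with Lemma~\ref{lem:separation} supplying the place where temporary seat holders are inserted in each hospital's ranking.

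\textbf{Augmentation.} For each doctor \(d\in D\) I introduce a private capacity-one temporary hospital \(a_d\). It finds only \(d\) acceptable, and \(d\) ranks it first; every other temporary hospital is unacceptable to \(d\).

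For each hospital \(h\) I introduce \(q_h\) temporary doctors \(b_h^1,\dots,b_h^{q_h}\). Each finds only \(h\) acceptable, and every original hospital other than \(h\) finds them unacceptable. In \(h\)'s ranking I insert all \(b_h^j\) as a consecutive block. The block lies strictly below every doctor in \(\mu(h)\) and strictly above every doctor in \(B_h(\mu)\). It also lies above \(\emptyset\), and above every doctor who is acceptable to \(h\) but not in \(\mu(h)\).

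Such a position exists: by Lemma~\ref{lem:separation}(ii), \(\mu(h)\) lies above \(B_h(\mu)\), and all of \(\mu(h)\) is acceptable. The natural choice is to place the block immediately below the lowest-ranked member of \(\mu(h)\), or at the top if \(\mu(h)=\emptyset\). Relative rankings of original agents and capacities are unchanged, so the augmentation is preference preserving.

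\textbf{Scaffold phase.} Starting from the empty market, the hospitals enter first (original and temporary), then the temporary doctors, then the original doctors. Each \(b_h^j\) proposes to \(h\) and is accepted, since \(h\) has \(q_h\) seats. Each original doctor \(d\) proposes first to \(a_d\) and is accepted. The resulting matching \(\nu_\emptyset\) has every original hospital filled by its own temporary doctors and every \(d\) held by \(a_d\). Stability after each entry is immediate: every entered doctor holds its first choice, and each temporary agent accepts only its designated partners.

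\textbf{Release phase.} For \(S\subseteq D\), I define \(\nu_S\) by:
\begin{itemize}
\item \(d\in S\) is at \(\mu(d)\);
\item \(d\notin S\) is at \(a_d\);
\item each \(h\) holds \(\mu(h)\cap S\) together with \(q_h-|\mu(h)\cap S|\) of its temporary doctors; the rest are unmatched.
\end{itemize}
The claim, proved by induction on \(S\), is that removing \(a_d\) for \(d\notin S\) yields \(\nu_{S\cup\{d\}}\), and that each \(\nu_S\) is stable.

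When \(a_d\) exits, \(d\) proposes in descending order. Take any hospital \(h\succ_d\mu(d)\) that finds \(d\) acceptable; then \(d\in B_h(\mu)\). By Lemma~\ref{lem:separation}(i), \(|\mu(h)|=q_h\), so \(h\) is full in \(\nu_S\). Every current holder of a seat at \(h\) is either in \(\mu(h)\) or a temporary doctor, and all of these are ranked above \(d\). Hence \(h\) rejects \(d\).

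At \(h^*=\mu(d)\), the set \(\mu(h^*)\setminus S\) contains \(d\) and has not been released. So \(h^*\) still holds at least one temporary doctor, and \(d\) ranks above the whole temporary block. Thus \(h^*\) accepts \(d\) and ejects its lowest temporary doctor. That doctor's only acceptable hospital has just rejected him, so the chain stops.

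If \(\mu(d)=\emptyset\), every acceptable hospital \(h\) that finds \(d\) acceptable is full by Lemma~\ref{lem:separation}(i) with holders ranked above \(d\), so \(d\) stays unmatched.

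\textbf{Stability of \(\nu_S\) and clean-up.} Stability of \(\nu_S\) follows from the same reasoning:
\begin{itemize}
\item unreleased doctors have their first choice;
\item for a released doctor, any preferred hospital is full of doctors ranked above him;
\item displaced temporary doctors are ranked below the members of \(\mu(h)\) that displaced them;
\item temporary hospitals accept only their own doctor.
\end{itemize}

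After all \(a_d\) have exited, the original agents are matched by \(\mu\). I then remove the unmatched temporary doctors, which leaves the matching unchanged. Finally I remove the temporary doctors still holding seats at hospitals with \(|\mu(h)|<q_h\). Each such exit frees a seat, but by Lemma~\ref{lem:separation}(i) \(B_h(\mu)=\emptyset\) for these hospitals, so no original doctor blocks and the continuation has length zero. The count of \(|D|\) temporary hospitals and \(\sum_h q_h\) temporary doctors is then read off the construction.

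\textbf{Main obstacle.} The delicate step is the rejection argument at preferred hospitals, which must hold at every intermediate \(S\). I need \(h\) to be full, with all holders ranked above \(d\), regardless of which target doctors have already been released. This is exactly where the two-sided placement of the temporary block, justified by Lemma~\ref{lem:separation}, is essential. I would also check that inserting the block below \(\mu(h)\) does not let \(h\) prefer a temporary doctor over a target doctor during acceptance, which would make \(h^*\) eject the wrong agent.
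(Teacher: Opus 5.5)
Your proposal is correct and is essentially the paper's proof: private capacity-one hospitals, a block of $q_h$ seat holders inserted between $\mu(h)$ and $B_h(\mu)$ via Lemma~\ref{lem:separation}, one-at-a-time release of the doctors, and a zero-length clean-up justified by part (i) of the lemma. Two small repairs are needed: first, drop the requirement that the block lie above every acceptable doctor outside $\mu(h)$, which is never used and can conflict with lying below $\mu(h)$, since a doctor who prefers her own $\mu$-assignment may outrank members of $\mu(h)$ (your ``natural choice'' placement is fine); second, define $\nu_S$ with the \emph{highest} remaining temporary doctors, since stability also needs each displaced seat holder to rank below the temporaries still held at $h$, which holds because $h$ always ejects its lowest one.
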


\begin{proof}
\textit{Step 1: Construct the temporary agents.}
For every original doctor \(d\in D\), introduce a private temporary hospital
\(a_d\) with capacity one.  Place \(a_d\) first in \(d\)'s extended
preference order, above \(\emptyset\) and every original hospital.  Every
other temporary hospital is unacceptable to \(d\), and \(a_d\) finds only
\(d\) acceptable.

For every original hospital \(h\), introduce \(q_h\) temporary doctors
\[
  b_{h,1},\ldots,b_{h,q_h}.
\]
Each \(b_{h,j}\) finds only \(h\) acceptable, and temporary doctors assigned
to other original hospitals are unacceptable to \(h\).  Insert the doctors
\(b_{h,1},\ldots,b_{h,q_h}\) into \(h\)'s ranking so that
\[
  d'\succ_h b_{h,1}\succ_h\cdots\succ_h b_{h,q_h}\succ_h d
\]
for every \(d'\in\mu(h)\) and every \(d\in B_h(\mu)\), and rank every
\(b_{h,j}\) above \(\emptyset\).  Lemma~\ref{lem:separation} guarantees that
such an insertion is possible without changing the relative ranking of any
two original doctors.  When \(B_h(\mu)=\emptyset\), place the temporary
doctors below all doctors in \(\mu(h)\) and above \(\emptyset\).

\textit{Step 2: Build the scaffold.}
Starting from the empty active market, let the original hospitals enter one at
a time.  Next let all temporary doctors enter.  Each \(b_{h,j}\) is accepted
by its unique acceptable hospital, so every original hospital becomes filled
to capacity.  Then let the private temporary hospitals enter, followed by the
original doctors.  Each original doctor proposes first to her private hospital
and is accepted.  The resulting matching is
\[
  \nu_{\emptyset}
  =
  \{(d,a_d):d\in D\}
  \cup
  \bigcup_{h\in H}\{(b_{h,j},h):j=1,\ldots,q_h\}.
\]
It is stable: every original doctor has her first choice, every matched
temporary agent has its unique acceptable partner, and an unmatched private
hospital accepts only a doctor who has not yet entered.  The same reasoning
shows stability after every entry in this phase.

\textit{Step 3: Release the original doctors.}
For \(S\subseteq D\), define \(\nu_S\) as follows.  Every doctor in \(S\) has
her assignment under \(\mu\), every doctor outside \(S\) is matched to her
private hospital, and each original hospital \(h\) holds
\[
  \mu(h)\cap S
\]
together with its highest
\[
  q_h-|\mu(h)\cap S|
\]
remaining temporary doctors.  Temporary doctors displaced from \(h\) are
unmatched.

We prove inductively that \(\nu_S\) is the current stable matching after
precisely the private hospitals indexed by \(S\) have exited.  The assertion
holds for \(S=\emptyset\).  Take \(d\notin S\) and remove \(a_d\).  Doctor
\(d\) becomes unmatched and initiates the doctor-proposing continuation.

Suppose first that \(\mu(d)=h^*\).  Consider a hospital \(h\succ_d h^*\).
If \(d\) is unacceptable to \(h\), it rejects her.  Otherwise
\(d\in B_h(\mu)\).  The hospital is currently filled by doctors in
\(\mu(h)\cap S\) and by its highest remaining temporary doctors.  Every such
doctor is ranked above \(d\), so \(h\) rejects her.  When \(d\) proposes to
\(h^*\), at least one temporary doctor remains there because
\(d\in\mu(h^*)\setminus S\).  The hospital ranks every doctor in
\(\mu(h^*)\), including \(d\), above every temporary doctor.  It therefore
accepts \(d\) and displaces its lowest remaining temporary doctor.  The
displaced doctor has just been rejected by her only acceptable hospital, so
the chain ends.  The resulting matching is \(\nu_{S\cup\{d\}}\).

If \(\mu(d)=\emptyset\), every acceptable hospital preferred by \(d\) to
being unmatched satisfies \(d\in B_h(\mu)\).  Each is full and all of its
current occupants are ranked above \(d\), so every proposal is rejected and
\(d\) remains unmatched.  The resulting matching is again
\(\nu_{S\cup\{d\}}\).

To verify the induction invariant, note first that an unreleased doctor has
her first choice.  If a released doctor prefers another hospital to her target
assignment, then either she is unacceptable there or she belongs to
\(B_h(\mu)\), in which case every current occupant is ranked above her.  A
displaced temporary doctor finds only her original hospital acceptable and,
because the hospital always displaces its lowest remaining temporary doctor,
is ranked below every doctor currently held there.  Private temporary
hospitals accept only their assigned doctors.  Hence \(\nu_S\) is stable for
every \(S\).

After all private hospitals have exited, the original doctors are matched
according to \(\mu\), while each hospital \(h\) holds exactly
\(q_h-|\mu(h)|\) temporary doctors when it is underfilled at \(\mu\).

\textit{Step 4: Remove the temporary doctors.}
First remove every temporary doctor who is currently unmatched.  This changes
neither the matching nor stability.  Then remove the remaining matched
temporary doctors one at a time.  Such doctors are present only at hospitals
underfilled in \(\mu\).  By stability of \(\mu\), no original doctor both
prefers such a hospital to her assignment and is acceptable to it; otherwise
that doctor and the hospital would block \(\mu\).  Temporary doctors assigned
to other hospitals do not find this hospital acceptable, and all displaced
temporary doctors have already exited.  Hence the inherited matching after
each removal is already stable, so the continuation has length zero.

All temporary agents have now exited, the active market is the original market
\(M\), and the terminal matching is \(\mu\).
\end{proof}

\section{Interpretation and Further Questions}
\label{sec:discussion}

\subsection{The stable set and historically reachable sets}

Three levels of information should be distinguished.  The stable set
\(\Stable(M)\) contains all stable outcomes supported by the terminal agents
and their preferences.  A class \(\Hist\) of admissible histories selects the
historically reachable set
\[
  \Reach_{\Hist}(M)
  =
  \left\{
    \mu\in\Stable(M):
    \text{some }h\in\Hist\text{ terminates at }(M,\mu)
  \right\}.
\]
A fully specified history, by contrast, terminates at a single outcome rather
than selecting a set.

Theorem~\ref{thm:negative} shows that a rich class of histories need not be
complete when the historical population is restricted to the agents visible
at the terminal date.  Theorems~\ref{thm:marriage-positive} and
\ref{thm:responsive} show that the full stable set is recovered when
preference-preserving, target-dependent augmentations by temporary historical
agents are admitted.  Thus, for some terminal markets,
\[
  \Reach_{\Hist^{\mathrm{term}}(M)}(M)
  \subsetneq
  \Stable(M)
  =
  \Reach_{\Hist^{\mathrm{aug}}(M)}(M).
\]

This supports an informational interpretation of the classical stable set.
It is the appropriate possibility set when the terminal market is observed
but the relevant population history is not.  A smaller reachable set is justified
only after the analyst imposes restrictions on the admissible histories.
Reachability is therefore a history-conditioned refinement of stability, not
a replacement for the stable set.

\subsection{Relation to scarcity orientation}

The definition of scarcity orientation separates the direction of a local
population event from the net change between two distant dates.  Because
terminal agents may enter and exit repeatedly, terminal-agent histories may
contain arbitrarily many orientation reversals.  Theorem~\ref{thm:negative}
shows that such reversals alone do not restore historical completeness: the
target matching in the example has no stable codimension-one predecessor whose
entrant-initiated continuation reaches it.

The positive results enlarge history in a different way.  Temporary agents
who are absent from the terminal market create rejection thresholds that guide
proposal chains toward a prescribed stable matching.  They form a historical
scaffold: they alter the route by which the market reaches the terminal
primitives without altering the relative rankings among the original agents
or remaining in the terminal market.  Hence a stable matching may be
unreachable from every history composed only of terminal agents and still be
consistent with a richer, unobserved population history.

\subsection{Limits of the present result}

The positive theorems deliberately grant substantial freedom.  The temporary
population and its preferences depend on the target matching.  The results do
not minimize the number of historical agents, and they do not construct one
common augmentation capable of generating the whole stable set through
different entry and exit orders.  Nor do they show that every admissible
restabilization path reaches the target; they construct one supporting
proposal-chain history.

Responsiveness also matters.  The proof uses a common ranking interval between
target doctors and all potential blockers.  Temporary doctors can be inserted
in that interval and replaced one at a time without displacing previously
installed target doctors.  Under a general substitutable choice function, the common ranking interval
used in the responsive proof is unavailable, and the present argument does
not establish a sequential placeholder representation.  A natural next step
is therefore to characterize choice functions satisfying a \emph{sequential
placeholder replacement property} and to determine whether that property is
necessary and sufficient for historical completeness.

Other extensions include many-to-many matching, matching with contracts,
capacities that change over time, and histories in which temporary agents'
preferences are restricted rather than freely designed.  The construction
also raises a quantitative question.  For a market \(M\) and target
\(\mu\in\Stable(M)\), define
\[
  \kappa_M(\mu)
  :=
  \min\{\text{number of distinct temporary agents in a supporting history}\}.
\]
Theorem~\ref{thm:responsive} gives the upper bound
\[
  \kappa_M(\mu)
  \leq
  |D|+\sum_{h\in H}q_h.
\]
Determining sharper bounds and their relation to the position of \(\mu\) in
the stable lattice is left for future work.

\section{Conclusion}

Reachability depends on what is allowed to count as market history.  Repeated
entry and exit of the agents observed at the terminal date need not make every
stable matching reachable under the specified proposal-chain continuation.
By contrast, when preference-preserving, target-dependent augmentations by
temporary historical agents are admitted, every stable matching of a marriage
market, and more generally every stable matching of a responsive many-to-one
market, has a supporting history.

The contrast clarifies the relation between static stability and historical
selection.  A known class of admissible histories may eliminate stable
outcomes that are incompatible with it.  When the relevant history is unknown
and preference-preserving augmented histories are admitted, however,
nonreachability under a more restrictive process cannot justify discarding
elements of the stable set.  The full stable set therefore remains the appropriate possibility set when
the terminal market is observed but the admissible population history is
unknown.

\bibliographystyle{plainnat}
\bibliography{historical_reachability_stable_matchings_final}

\end{document}